\newcommand*{\LONG}{}%
\def\squarebox#1{\hbox to #1{\hfill\vbox to #1{\vfill}}}
\newtheorem{theorem}{Theorem}
\newtheorem{definition}{Definition}
\newtheorem{claim}{Claim}
\newcommand{\namedref}[2]{\hyperref[#2]{#1~\ref*{#2}}}
\newcommand{\sectionref}[1]{\namedref{Section}{#1}}
\newcommand{\figureref}[1]{\namedref{Figure}{#1}}
\newcommand{\claimref}[1]{\namedref{Claim}{#1}}
\newcommand{\lref}[1]{\namedref{Line}{#1}}
\newcommand{\true}{\mathit{true}}
\newcommand{\ASMfm}{\mbox{ASM\!$f\!m$}\xspace}
\newcommand{\SMPfm}{\mbox{SMP\!$f\!m$}\xspace}
\newcommand{\SMPm}{\mbox{SMP$m$}\xspace}
\newcommand{\SBMPfm}{\mbox{SBMP\!$f\!m$}\xspace}
\newcommand{\tri}[1]{\left<#1 \right>}
\def\beginsmall#1{\vspace{-\parskip}\begin{#1}\itemsep-\parskip}
\def\endsmall#1{\end{#1}\vspace{-\parskip}}
\newcommand{\kSetCons}{\textsc{Dynamic $k$-vector-set Consensus}\xspace}
\newcommand{\mSetCons}{\textsc{Dynamic $(m+1)$-vector-set Consensus}\xspace}
\newcommand{\mAD}{{\textit{m-}\textsc{ad}}\xspace}
\newcommand{\oAD}{{{\small 1-}\textsc{ad}}\xspace}
\newcounter{linenumbers}
\newcounter{todocounter}
\newcommand{\todonum}
{\stepcounter{todocounter}{(\thetodocounter)}}
\newcommand{\dd}[1]{\textbf{\color{blue}
[[\todonum\ dd: #1]]}}
\newcommand{\tb}{\makebox[0.6cm]{}}
\newcommand{\due}{\makebox[1cm]{}}
\newcommand{\hide}[1]{}
\newcommand{\commentout}[1]{}
\newcommand{\Byzantine}[0]{\mbox{\emph{Byzantin}\hspace{-0.15em}\emph{e}}\xspace}
\newcommand{\ca}{{\small\textsc{commit\!\_adopt}}\xspace}
\newcounter{linecounter}
\newcommand{\linenumbering}{\ifthenelse{\value{linecounter}<10}
{(0\arabic{linecounter})}{(\arabic{linecounter})}}
\renewcommand{\thelinecounter}{\ifnum \value{linecounter} > 
9\else 0\fi \arabic{linecounter}}
\begin{document}

\title{Synchronous Hybrid Message-Adversary}
\author{Danny Dolev (HUJI)\footnote{Email: danny.dolev@mail.huji.ac.il.} \ and
Eli Gafni (UCLA)\footnote{Email: gafnieli@gmail.com.}\\[1.2ex]
regular submission, not eligible for the best student paper award\vspace{-2ex}}

\date{}

\maketitle

\thispagestyle{empty}

\setcounter{page}{0}

\section*{Abstract}
The theory of distributed computing, lagging in its development behind practice, has been biased in its modelling by  employing mechanisms within the model mimicking reality. Reality means, processors can fail.
But theory is about predicting consequences of reality, hence if we capture reality by ``artificial models,'' but those nevertheless make analysis simpler, we should
pursue the artificial models.

Recently the idea was advocated to analyze distributed systems and view processors as infallible. It is the message delivery substrate that causes problems.
This view not only can effectively emulate reality, but above all seems to allow to view any past models as 
\emph{synchronous} 
models. Synchronous models are easier to analyze than asynchronous ones. Furthermore, it gives rise to models we haven't contemplated in the past. One such model, presented here, is the Hybrid Message-Adversary. We motivate this model through the need to analyze Byzantine faults. The Hybrid model exhibits a phenomenon not seen in the past.

\commentout{
Synchronous systems are easier to analyze than asynchronous ones.
Recently it was shown that asynchronous wait-free shared memory can be cast as a synchronous 
Message passing system with message adversary.
As we show this can be generalized: Asynchronous $m$-resilient shared-memory system can be modelled as a synchronous message-passing system
with message-adversary that in each round can choose a set of $m$ processors and removes 
some of the messages they send.  
What is then the synchronous analogue for an asynchronous  point-to-point message-passing 
system with $f$ Byzantine  faults?

We speculate that such a synchronous system consists of message adversary which
can tamper in each round with messages emanating from a fix set of $f$ processors, but in addition, in
each round can choose $m=f$ processors and remove some of the messages they send.


The speculation motivates a new type of synchronous system with mix of fault types -- a hybrid system. We investigate what is the set consensus power of a hybrid synchronous message adversary
that can tamper with messages emanating from fixed set  of $f$ processors and in addition remove messages in each round from, possibly varying by the round,
set of $m$ processors?

For the vector-binary-set-consensus task, one of the few tasks that makes sense in this context, we show that the power of the above Byzantine system is
the same as that of a system in which we reduce the power of the adversary from tampering with the
messages from the fixed set, to just removing them, i.e. omission faults instead of Byzantine faults.

We characterize the set consensus power of this latter hybrid adversary that can just
omit messages, for all combination of $n$, $f$, and $m$. Surprisingly, we encounter a phenomenon never seen before: For a fixed $t$ and $m$,
as we increase the number of processors $n$, the value of the set consensus power as a function of $n$, does not exhibit a single step behavior, but it grows gradually with
increasing $n$, until saturation at $m+1$-set-consensus power when $n \geq t(m+1)+1$.
Hence, we speculate that the vector-binary-set-consensus power of asynchronous $f$ Byzantine faults system
is $f+1$ when $n \geq  \max(3f,f(f+1))$, and deteriorates gradually as we decrease $n$.

}

\newpage

\commentout{
***********************************************************************
A synchronous Byzantine system on $n$ processors and $f<(1/3)n$ Byzantine faults can solve binary-consensus.
An asynchronous system with the same $n$ and $f$ cannot. But what is the minimum $k$ such that 
when confronted with $k$ independent binary-consensus tasks the asynchronous system 
can resolve at least one of them?

Surprisingly, this most natural version of set-consensus was not addressed in the past in the context
of Byzantine processors.

This paper resolves this question by showing an equivalence with respect to binary-consensus 
between asynchronous Byzantine model on $n$ processors and $f$ Byzantine faults, and a synchronous model
on $n$ processors with at most $2f$ processors $O_r$ exhibiting omission faulty in a round $r$,
but constrained by the condition that $|\cap_i O_i|\geq f$.

Traditionally, the distributed world is dichotomize to synchronous and asynchronous.
In a synchronous complete network with $t$ Byzantine faults one can reach consensus,
while as we show, is an asynchronous complete network with the same number of  Byzantine faults 
it can do at best $t+1$ set-consensus. Can we find a new modelling paradigm that will
give us a sequence of models $M_1,M_2,\ldots,M_{t+1}$, each with $t$ Byzantine faults, and some
$n$ processors, such that $M_1$ is equivalent to
the synchronous model, while $M_{t+1}$ is equivalent to the asynchronous model,
and in model $M_i$, $i$-set consensus is solvable?

This paper does just that. It considers a synchronous network
with $t$ Byzantine faults and $m$ mobile omission faults all modelled by a message-adversary
that can delete messages to simulate omissions and alter messages to simulate
Byzantine faults. We show that $M_i$ is a synchronous network with $t$ Byzantine
faults and $m-i$ mobile omission faults, while $n>(t+1)^2$ processors. 

(2 models are equivalent if they have the same $t,n$ and they solve the same multi-dimensional
convex hull epsilon agreement)

Finally, we show the ramification that in a read-write protocol the Byzantine faults can only affect the input of the Byzantine threads. Afterwards the threads can be faithfully execute the threads. Consequently, tasks
like adaptive renaming make sense, as it is an inputs task, in the context of asynchronous system with $t$ 
Byzantine faults. Since this system is equivalent to a synchronous system with $t$ fixed Byzantine
faults and $t$ mobile omission faults it can solve $t+1$ set consensus. Consequently, it
can adaptively rename to $n+(t+1)-1$ positions.

Computability in Distributed-Computing is about unexpected failures.
Traditionally the processors are the one to fail,
hence a question like ``does a failed processor is obliged to decide or not?''
Recently a line of investigation opened that shifts the blame for failures
from the processors to the communication substrate.
Thus processors are always good.

This paper extends this line of research to resiliency and asynchronous Byzantine models.
This paper investigates various synchronous message-adversary possibilities and their equivalent
traditional models. In particular, we obtain the general result that in the Byzantine setting
the difficulty can be confined to adversary changing some of the inputs, but cannot
interfere in the computation beyond that.

 apply this view to the Byzantine Agreement setting to arrive at the Reformed Byzantine=General.
As the shift of blame unified synchrony and asynchrony making asynchrony to 
be synchronous system with mobile faults, we show the same holds for the Byzantine
Agreement. We re-derive legacy results in the new setting, and present new problems
that the setting triggers. In particular we show that, unlike the traditional Byzantine setting, it now,
in the Reformed Byzantine-General Setting, makes sense to ask
for solution of some colorful tasks.
********************************************************************************
}

\section{Introduction}
Recently, in \cite{AG15}, it was shown that a read-write wait-free asynchronous system can be modelled by a synchronous
message passing system
with a message adversary. This was a step beyond \cite{ABD90,ABD95} that equated shared-memory and message-passing when at most a minority of the
processors are faulty. The work in \cite{AG15}, 
not only equated message-passing and shared-memory for the potential of $n-1$ faults,
but in addition the message-passing system was synchronous. And, it paid handsomely with an almost trivial derivation of the 
necessity condition of asynchronous computability \cite{HS}.

Here, as a first step, we extend the study of \cite{AG15} to the $m$-resilient read-write asynchronous system,
and find the message adversary \mAD in the synchronous message passing model that makes the synchronous system equivalent to the asynchronous one.
\mAD is a system in which in each synchronous round \mAD chooses $m$ processors and can remove any of the messages they send, subject to the condition that for the two messages exchanged
between a pair of processors from the chosen $m$ processors in the round, it removes at most one message.

This follows from the simple realization that in \cite{AG15}, when we take a system of two processors, instead of viewing
the adversary as one that at each round can remove one message of the two sent, we could equivalently say that the adversary chooses a processor and can remove
the message it sends. A simple semantic difference, that was not realized by us for a while.

The structure of \oAD was realized 3 decades ago in \cite{Healer} but not pursued beyond $m=1$.
The celebrated transformation of shared-memory to message-passing in \cite{ABD90,ABD95}, which holds only for $m<(1/2)n$,
is hard to cast as a message-adversary that chooses a set of processors and removes some of its messages.
It is more than a decade later that the second author quoted in \cite{983102} discovered that after 3 rounds of the transformed
system it can be viewed as a message-adversary.

When solving colorless tasks (\cite{BGLR}), if $m<(1/2)n$, we can do away with this last condition that prevents \mAD from removing the two messages exchanged by faulty processors. 
For the rest of the paper we consider only colorless tasks, hence we allow \mAD a complete freedom in removing in a round any messages from the $m$ processors it chose for the round. We call this type of adversary \emph{$m$-mobile omission faults}.

This paper deals with just a single colorless task. The colorless task we consider  is the binary-vector-consensus-task.
To explain this task we first put it within the historical perspective of the development of the various versions of the set-consensus task.

The ingenious original version of set-consensus task invented by Chaudhuri \cite{SC}, was
a version we call here, \emph{set-election}: $n$ processors start each with their own identifier (id) as its input,
and each outputs a participating id, such that the number of distinct outputs should be smaller than $n$.

Using topological arguments this problem was later proved to be unsolvable read-write wait-free \cite{BG,HS,SZ}.
About a decade and a half later a technically simple but profound
paper \cite{VSC} formulated the task of \emph{vector-set-consensus}. In the vector-set-consensus task,
the $n$ processors are faced with $n-1$ independent consensuses. Each processor outputs for at least one of the consensuses,
such that, two processors that output for the same consensus satisfy validity and agreement for that consensus.
A simple reduction to set-election showed the vector-set-consensus to be unsolvable read-write wait-free,
and to be equivalent to set election.

But maybe, had we restricted each of the consensuses in the vector-set-consensus task to be just binary-consensuses,
rather than multi-valued consensus, then the resulting binary-vector-set-consensus is solvable
read-write wait-free? 
Precluding this possibility is equivalent to proving Generalized Universality \cite{Rachid},
a major result, which succeeded only few years after the formulation of the vector-set-consensus task.  

The various tasks above have the obvious analogue when we change $n-1$ to $k < n-1$.
Indeed, Chaudhuri \cite{SC} proposed the problem in the context of $m$-resilient system, and asked
whether the system can solve $m$-set consensus. The BG-simulation in \cite{BG,BGLR} showed that had her problem
been solvable read-write $m$-resiliently, then set-election could have been solved for $m+1$ processors read-write
wait-free, which is impossible. On the flip-side, a $m$-resilient system can trivially solve $m+1$ election! 

But all these analyses were conducted in the context of benign failures. What if we have an asynchronous
system with $n$ processors and $f<(1/3)n$ Byzantine failures? Obviously, Byzantine failures are more serious failures than
omission failures, and since we cannot solve $f$-binary-vector-set-consensus in the $n$ processors $f$-resilient
model we cannot solve it in the asynchronous $n$ processors system with $f$ Byzantine faults.
Although the Byzantine failures are on a fixed set of processors, if these processors just lie about say 
their inputs, they cannot be detected, and a processor has to move on after receiving $n-f$ messages,
lest the $f$ it missed are the Byzantine set, which omitted messages. 
Hence $f+1$-set consensus is a lower bound in this case too. But can we solve
the $f+1$-vector-binary-set-consensus problem, in $n$ processors system with $f<(1/3)n$ Byzantine faults like we could do
it for the analogue $f$-resilient system? 

To our knowledge it is the first time this question has been asked.
This is not surprising since the notion of binary-vector-set-consensus can only really be understood on the background of \cite{Rachid}.
The question was asked with respect to the election version \cite{Dalia}, but not for the vector-set-consensus task, let alone the binary-vector-set consensus task, since these tasks
had not been introduced as yet, then.
Why wasn't it asked with respect to the vector-set-consensus? Since these are reducible to each other without
introducing the ideas in \cite{Rachid}. It looked meaningless to recast the results in \cite{Dalia} for vector-set-consensus. We do not believe that the general multi-valued set consensus task with 
its 24 (at the time!)  possible versions, or for that matter even the general vector-set-consensus has any bearing on the vector-binary-set-consensus,
when we consider Byzantine faults. Hence we do not pursue this multi-valued route, since we do not know a reasonable formulation of it in the Byzantine case.
Thus, we are left with the question of the vector-binary-set-consensus power of
asynchronous Byzantine $n,f$ system.


We thus investigate this new general type of system. A \emph{synchronous} system with $f$-fixed Byzantine faults and in addition
$m$-mobile omission faults. We show this system to be equivalent in its vector-binary-set-consensus power
 to the same system of less 
severe failure of omission rather than Byzantine. We therefore investigate the general model of $f$-fixed omission faults
and $m$-mobile omission faults. We call such a system that intermingles fixed omission failure and mobile 
omission failure, an \emph{hybrid adversary}.

We completely characterize the vector-set-consensus power of the synchronous hybrid fix/mobile omission system with
respect to all set-consensus versions, since
in the non-hybrid benign faults all these versions are equivalent.

We do  not show a reduction between the two synchronous systems one with the $f$-fixed Byzantine failures and $m$-mobile omissions, and the other
with both types of failures being omission failures. Byzantine processors can always alter their input
and behave correctly there on. Thus, when inputs conflict there is no resolution as to who are the ``good ones'' and 
who are the ``bad ones.'' Rather, we should consider only contexts in which such resolution is not needed. To show
such a result, the way  we speculate to do it is first to show that the two systems have the same vector-binary-set-consensus power,
which we do here.

Obviously the set consensus power of the benign system with fixed omissions and mobile omissions is stronger
than the analogue synchronous hybrid Byzantine system. Thus a lower bound to the former
is a lower bound to the latter.  To show equivalence we present an algorithm
for the Byzantine system that gets the same consensus power as that of the benign system, explicitly.

Our way of proceeding first with the binary-vector-set-consensus-power is in line with our recent thinking that systems
that agree on set consensus tasks have the same power when other tasks are concerned. That is the thinking of:
``set consensus tasks are the coordinates of any (reasonable) system,'' \cite{CSC}.

We obtain the result that an $n,f,m$ omission system requires $n \geq (m+1)f+1$ in order to solve the best value of 
set-consensus it can solve, $m+1$. For lesser values of $n$ it can solve set consensus $m+j$, when 
$n \geq \lfloor f(m+j)/j \rfloor +1$.

This is the first time that we see a model where its set-consensus power changes gradually with $n$.
In the $m$-resilient shared-memory case, we can always solve $m+1$-set consensus. In the message
passing system when $m > n/2-1$ we cannot solve anything. Once we are below that threshold we can
solve $m+1$-set consensus. It is the combination of faults that give rise to this gradually changing
power phenomenon.

\commentout{
&&&&&&&&&&&&&&&&&&&&&&&&&&&&&&&&&&&&&&&&&&
Surprisingly, we show that the answer to this question depends on the ratio between $n$ and $f$
rather than only on the condition that $f<(1/3)n$. For $n> (f+1)^2$ the answer is positive:  The $t$-binary-vector-consensus task cannot be solved but it is solvable for $t+1$. Our argumentations and the values of $n$ we obtain
are reminiscent of the numbers derived when considering the Multi-Dimensional Convex-Hull
Epsilon agreement problem \cite{}. We conjecture a profound relationship. 

We do this by showing that, as multiple independent binary consensuses task are concerned, the asynchronous system
of $n$ processors $t<(1/3)n$ Byzantine can be equated with a synchronous system with message adversary 
that may chose processors at each round and delete some of their messages. In each round it has a set
of $t$ processors, $t$-fixed, which are fixed from round to round whose messages it can delete,
as well as in each round choose a set, $t$-mobile, of $t$ processors some of whose messages it can delete.

This kind of synchronous system with adversary can be generalized to $t$-fixed, and $m$-mobile, $m \neq t$.
We analyze  this new kind of system for any combination of $n$, $t$, and $m$.

The problem of set-consensus in the Byzantine context was considered in the past \cite{}, but 
the version of set-consensus task investigated was that of Chaudhuri, which we called above set-election.
Both the set-election and the binary-vector-set-consensus task are \emph{colorless}-tasks \cite{}
in which processors are allowed to adopt input and output values from other processors.
But in the Byzantine context what rational can one offer to adopting a value from Byzantine
processor?

This paper takes the stand that it is meaningless to ascribe any ``importance'' to the inputs or outputs
of a Byzantine processors. Hence, set-election is meaningless in the Byzantine context, as how a processor to know whether it
adopted a value from a correct processor rather than a Byzantine one. On the other hand, binary-consensus
does not fail this test. When $f<(1/3)n$ as a processor sees $f+1$ same inputs, as it must see
at least one group like this, can safely adopt this value.
Hence, in this paper we are going to investigate asynchronous Byzantine system on $n$ processors and
$f$ Byzantine faults only with respect to the binary-vector-set-consensus tasks.

Elsewhere \cite{} we conjecture that ``there is nothing to distributed but set consensus.'' Hence we
conjecture that for any ``reasonable'' question about Byzantine system the equivalence of teh asynchronous
Byzantine system and our proposed synchronous system coincide. An example of novel reasonable question 
viz-a-viz Byzantine appears in \cite{}.
&&&&&&&&&&&&&&&&&&&&&&&
}

\commentout{
\subsection{Why Asynchronous Byzantine is not ``Simple,'' and left out for later research?}

What is the Asynchronous $f$-Byzantine model? It assumes that a fixed set of at most $f$ processors can fail
in a Byzantine way. Thus, if all processors send messages, after a processors receives $n-f$ messages it has
to move on, since the $f$ it has not heard from may never send it a message, \emph{unless} it has
a proof that it has received a message from a Byzantine processor. Consequently for each Byzantine
message it has proof it has received, it is allowed to wait on an additional message.
This proviso of possible additional waiting is a complication that we had not resolved yet. 
%

What is the intuition of equating asynchronous Byzantine $f$, to synchronous Byzantine $f,f$?
Since messages have to be sent for the system to progress, it cannot be the case that what the Byzantine processors
should send can be predicted, hence, there is the option for the message adversary to tamper with the Byzantine
messages as to send at best meaningless messages that do not advance the computation, yet on the other hand
cannot be detected as meaningless, exposing a Byzantine processor. Since in such case the system has to advance
after a processor receives $n-f$ messages, we can think of the system as a synchronous system with $f$-fixed
Byzantine processors and $f$-mobile omission processors.
But this intuition is hard to make rigorous and it is still a work in progress. Yet, the consequences of this intuition give
rise and motivate a very interesting model never considered before - the Hybrid Adversary, which we investigate here.
}

\subsection{Outline of the Paper}
We first show and prove the synchronous analogue of the asynchronous $m$-resilient model,
and show that some restrictions on the adversary can be removed in the case of colorless-tasks (\cite{BGLR}).

We then characterize the hybrid $n,f,m$ omission adversary for different combinations of these values.
We use the  \cite{BG} simulation to show that  if a set of less than $(m+1)f+1$ processors can do $m+1$ set consensus then 
$m+2$ processors can do set-consensus wait-free contradicting \cite{BG,HS,SZ}. 

Finally we show that the upper bound algorithm holds for the hybrid synchronous Byzantine system
when dealing with binary-vector-set-consensus. 
For the upper bound we use the rotating coordinator algorithm \cite{Reischuk198523,Chandra:1991:TME:112007.112027}.


\section{Problem Statement and Models}
We assume a set of $n$ processors 
$\Pi=\{p_1, p_2, ..., p_n\}$. 
The paper focuses on solving the $k$-vector-set consensus problem in the hybrid omission model $\SMPfm$, to be explained below.

\begin{definition}[$k$-vector-set consensus problem]
Every processor has a vector of $k$ initial inputs. Each processor, $p$, returns a vector $o_p[1..k]$ that satisfies:
\beginsmall{enumerate}
\item[\textup{ vset1:}] If for any index $j\in[1..k]$ the inputs of all processors are the same, then every processor returns  that value for index $j$.
\item[\textup{ vset2:}] For every entry $j\in[1..k]$ and for every two processors, $p$, and  $q$,  if $o_p[j]\not=\bot$ and $o_q[j]\not=\bot$ then $o_p[j]=o_q[j]$,
and it is one of the inputs at entry $j$ for some participating processor.
\item[\textup{ vset3:}] There exists an index $j\in[1..k]$ such that for every two processors, $p$, and  $q$,   $o_p[j]=o_q[j]$.
 \endsmall{enumerate}
\end{definition}

Notice that requiring all to output for the \emph{same} index $j$, is equivalent to asking just to output for some $j$, since processors
can write their outputs, and then read and adopt outputs it sees. The first output written will be adopted by all.

\begin{definition}[binary-$k$-vector-set consensus problem]
Same as the above only that the vector of $k$ initial inputs each processor has is a binary vector of 0's and 1's.
\end{definition}


The model \SMPfm, called \emph{hybrid omission $n,f,m$}, is a synchronous point-to-point message passing system.
We consider an adversary that can remove messages.
Before the start of the algorithm the adversary chooses a set $S^f$ of $f$ processors.
In each round the adversary can remove some or all of the messages sent by $S^f$.
In addition in each synchronous round it can choose a set $S^m$ of $m$ processors and remove some 
or all of the messages sent from $S^m$.
We call $S^f$, the fixed set, and $S^m$ the mobile set.

Presenting models as ``message adversary'' enables us to easily deal with dynamic systems in which processors that  presented an external erroneous behavior at one point to start behaving correctly later on, without the need to discuss what is their internal state when this happens.

Where is \SMPfm coming from.
The method in \cite{AG15} transformed asynchronous wait-free SWMR shared memory into a synchronous message-passing adversary.
This message passing adversary can be viewed, in hindsight, as an adversary that at each round chooses $n-1$ processors
and removes some of their messages so that between two processors it chose, at least one of the two messages sent between
them is not removed.

In the asynchronous $m$-resilient SWMR iterated shared memory, which is equivalent to the classical $m$-resilient 
non-iterated model \cite{BG97,tasks,HR10}, 
at each iteration processors are presented with a fresh SWMR memory initialized to $\bot$,
and all write their cell and then snapshot the memory. The $m$-resiliency assumption entails that for each processor
the snapshot returns at most $m$ cells with the value $\bot$.

We now claim this model is equivalent to the synchronous message passing system in which a message adversary chooses
a set $S^m$ of
$m$ processors at a round and removes some of the messages they send. But, nevertheless, this removal is constrained
by the condition that the adversary removes at most one of the messages sent between two members of $S^m$.
We call this system constrained-\SMPm.

Obviously an iteration of the asynchronous SWMR iterated shared memory, simulates a synchronous round of  constrained-\SMPm.

In the reverse direction we notice that constrained-\SMPm can simulate an iterated SWMR collect step in $2n-1$ rounds \cite{AG15},
since a round of  constrained-\SMPm is obviously a round of  constrained-SMP$_{n-1}$.
Hence we can now simulate the Atomic-Snapshot algorithm in \cite{Afek:1993}.
Since in each round of constrained-\SMPm processors ``read'' from at least $n-m$ processors,
the minimum size snapshot will be $n-m$.

Can we remove the constrain on constrained-\SMPm, so that the message adversary chooses $S^m$ and can possibly remove
all their messages violating the requirement that among pairs in $S^m$ at least one message survives?
We can, when all we want is to solve colorless tasks and $m<(1/2)n$.
When $m<(1/2)n$ a processor that through the rounds hears that at least
$n/2+1$ processors have heard from it, can be sure that in the next round all will, since
one of these $n/2+1$ processors will not be chosen by the adversary in the next round.
Hence we can see that at least $n-m$ processors can progress simulating read write.
What we lose is that if the adversary sticks with a fixed set $S^m$ those
processors cannot communicate. 
But at least $n-m$ processors will have outputs and $S^m$ can adopt theirs.

Our last system of interest is $\SBMPfm$. This system is like \SMPfm only that the adversary can now tamper
with messages from the fixed set $S^f$. Thus, it is, in a sense, like having $f$ fixed byzantine faults and $m$ mobile omission faults.

\commentout{The first new question we attack is the vector-set-consensus power of the system \SMPfm.
In the second question we consider only the problem of the binary-set-consensus-power of $\SBMPfm$.
We show that the binary-set-consensus-power of  $\SBMPfm$, is the same as the binary-vector-set-consensus
power of \SMPfm, which is the same as its vector-set-consensus power. 
}
\commentout{
&&&&&&&&&&&&&&&&&&&&&&&&&&&&&&&&&&&&&&&&&&&&&&&&&&&&&&&&&&&&&&&&&&&&&&&&&

One can easily see that all past results regarding a fixed set of faults being controlled by an adversary also hold in the current model in which the adversary controls only the outgoing messages being sent from the processors it controls. 
 
\commentout{
 \dd{eli, should we elaborate more that what i wrote below? or should we not mention that at all?}
The only model in which one can notice a difference is the omission model that now becomes an output-omission, since a faulty node receives the same set of messages as any other processor.  
The new adversarial model can become equivalent to the general omission model if we allow the adversary to manipulate the outgoing messages and not only to delete some messages.
}

\subsection{Asynchronous $M$-resilient Shared Memory Model}
We prove a lower bound result by reducing computations in a message passing model to an asynchronous shared memory model. 

\begin{definition}[ASM$t$]
In the asynchronous shared memory model the communication is via Single-Writer-Multi-Reader memory cells.  Each processor can snapshot the memory cells of all other processors and write to its own. Processors alternate between writing and snapshoting.
Betwwen two consencutive snapshots of a processor, at least $n-t$ other processors wrote.
\end{definition}

We know \cite{tasks, HR10} that (ASM$t$) can be made iterated where in each iteration a processor writes into it cell which is
initialized to $\bot$, and then takes
a snapshot. The snapshot size (the number of non-$\bot$ values) is at least $n-t$. 
\dd{should we add a reference to this?}

The message-Passing system we study in this paper has a iterarted shared memory analogue.
We call this system \ASMfm. In \ASMfm in each iterartion a processor may read up to $m+f$ values as $\bot$.
Yet there is a dependence between the cells missed in each iterartion - they intersect.
Every round in which a processor misses $k$ values, then $k-m$ cells it missed come from a fixed set of cells
of size at most $f$.

In english: In each iteration there are $m$ processors that might be late to write. These $m$ processors might change
from round to round. In addition across iterartion there is a fixed set of $f$ processors that might be late to write too.

\subsection{Synchronous $t$-resilient Message Model}

 \begin{definition}[SMP$t$, \SMPfm] In the Synchronous Message-Passing model processors communicate by exchanging messages in synchronous rounds. 
 \dd{this produces some confusion. - if a round included several stages of message exchange, the adversary is allowed to replace the set in each stage - therefore i refer below to synchronous message sending steps.}
Every round is composed of three stages. In the
first stage, every processor broadcasts a message to all the other processors. In the
second stage, every processor receives all the messages sent to it during the round.
In the third stage, every processor performs a local computation before starting the
next round. A run is a sequence of rounds.  Communication
channels are reliable.
The adversary may manipulate messages sent from up to $t$ processors.
We will mainly consider the following two adversaries:
 \begin{itemize}
 \item
 \textup{ [Hybrid Omission]:}
 $t=f+m$ (an \SMPfm model) and in each synchronous message sending step the adversary chooses a set of $m$ processors and can remove  messages they send, and in addition there is a fix set of $f$ processors from which it can also remove messages.
 
\item
\textup{ [Hybrid Byzantine]:}
\textup{ [Hybrid Identical-Byzantine]:} $t=f+m$  (an \SMPfm model) and in each synchronous message sending step the adversary chooses $m$ processors and can remove  messages they send, and in addition there is a fix set $F$ of $f$ processors that the adversary can manipulate their outgoing messages.
 \end{itemize}
 The full spectrum of our results will also refer to the following adversaries:
 \begin{itemize}
 \item
\textup{ [Restricted Omission]: } (An SMP$t$ model) In each synchronous message sending step the adversary chooses a set of $t$ processors and can remove messages they send subject to the constraint that among any two processors there is at least one message getting through.
 \item
\textup{ [Omission]:}  As in Omission without the restriction on a message between any two processors.
  \end{itemize}

\end{definition}

\subsubsection*{Mappings between ASM$t$ and SMP$t$ }
Obviously a round of (ASM$t$) simulates a round (SMP$t$) with omission adversary. There are at most $t$ that are missed by some others at a round, and by the shared-memory property at least one processor reads the other.

Similarly a round of (SMP$t$) with omission adversary simulates a round of (ASM$t$). Just equate a send with write and a receive with read.

For $t< (1/2)n$ these systems are equivalent to the traditional Message-Passing (MP) system in which at most $t$
processors may fail-stop. The difference between (ASM$t$) and asynchronous (MP) is that in the latter system
the set of $t$ processors or less missed by two different processors might not be related by containment.

This can be rectified by three rounds of asynchronous MP. Since the total number of messages received
in the second round is at least $n(n-t)$, then at least $n-t$ messages of a processor $p_j$ in the second round 
has been received by $n-t$ processors. This processor reported values from a set on $n-t$ processors and
as a result these $n-t$ processors miss at most a set of size $t$, since $t<(1/2)n)$ then in the third round all
will hear from at least one processor from this $n-t$ set, and therefore all will miss from at most a single set
of $t$ or less.


\subsubsection*{SMP$t$ when solving colorless tasks}
Consider an algorithm to solve a colorless task in (SMP$t$). 
We now simulate an asynchronous SWMR $t$-resilient system. A processor
finishes a simulated write when $n-t$ processors received its message. Since some fixed
set of $n-t$ processors will not be touched by the adversary infinitely often since the combination of $n-t$ set
of processors is finite, then eventually these processors will output as they tell each other they have heard
from at least each other. Now the rest of the processors can adopt one of these outputs as their output.

Thus, the argument above holds even without processor in the set $t$ in a round have at least one message
going through. These messages allow the set $t$ to accomplish read write among themselves, but for 
colorless tasks, they do not need that.

Thus we consider w.l.o.g when dealing with colorless tasks, as the rest of the paper does,
that the adversary can remove messages from the set of $t$ at a round with no hindrance. Therefore, there is no difference between a restricted omission adversary and omission adversary, with respect to colorless tasks, such as $k$-vector-set consensus.

\dd{what else do we want to add in this section?}
}

\section{The Lower Bound}\label{sec:lower-bound}

\begin{theorem}
For the hybrid omission $n,f,m$, if $n<f(m+1)+1$, there is no algorithm to solve ($m+1$)-vector-set-consensus.
\end{theorem}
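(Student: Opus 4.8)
The plan is to prove the contrapositive by a \cite{BG}-style simulation. Suppose, for contradiction, that some algorithm $A$ solves $(m+1)$-vector-set-consensus in the hybrid omission model \SMPfm (equivalently, in its iterated shared-memory form \ASMfm) with $n\le f(m+1)$. I will use $A$ to make $m+2$ processes solve $(m+1)$-set-consensus wait-free, which is impossible by \cite{BG,HS,SZ}; this contradiction gives the theorem. Since $n\le f(m+1)$, I first partition the $n$ processors into $m+1$ disjoint blocks $B_1,\dots,B_{m+1}$, each of size at most $f$. The blocks are the device that lets the hybrid adversary's budget ($f$ fixed omitters plus $m$ mobile omitters per round) absorb the asynchrony that the simulation introduces.

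First I would set up $m+2$ simulators $q_1,\dots,q_{m+2}$ that cooperatively run a single simulated execution of $A$ over the $n$ real processors, advancing each real processor's round-by-round state through the usual safe-agreement (commit-adopt) boxes so that a stalled simulator freezes only the real step it is currently resolving. The wiring I would use ties each block to the simulators so that, in any wait-free schedule in which at most $m+1$ of the $m+2$ simulators are slow, the set of real processors whose messages fail to be delivered in the simulated run is covered by one block held constant as the fixed set $S^f$ (at most $f$ processors) together with at most $m$ further processors charged to the mobile set $S^m$ (which the adversary is free to keep fixed across rounds). The counting $n\le f(m+1)$ is exactly what guarantees that $m+1$ blocks of size $\le f$ suffice to cover all of $\Pi$, so that no more than these $f+m$ omissions per round are ever required. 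Because at least one simulator survives, the simulated run makes progress and every live real processor reaches an output; the $m+2$ simulators then adopt these outputs, and, via the equivalence between vector-set-consensus and set-consensus noted in the introduction, this yields a solution to $(m+1)$-set-consensus among $m+2$ wait-free processes.

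The step I expect to be the main obstacle is verifying that the schedule induced on $A$ is simultaneously a \emph{valid} hybrid run and one that still \emph{terminates}, and in particular handling the feature that distinguishes omission from crash: an omission-faulty processor still receives every message and must itself decide, so its output is constrained by vset2. I cannot simply treat a stalled real processor as crashed and discard it; I must show that the simulation delivers to each live processor a view that is legal under the $f$-fixed/$m$-mobile message-removal pattern, and that the fixed-versus-mobile distinction is genuinely invisible to $A$ within a round (so that $A$ cannot localize and excise $S^f$ to reduce to a pure $m$-resilient run). Concretely, I would prove a lemma that every reachable configuration of the simulation corresponds to a configuration of \SMPfm in which the omitted messages respect the block partition, and then invoke the per-round indistinguishability of fixed and mobile omitters to conclude that $A$'s guarantees must hold in the simulated run. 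Discharging this correspondence, rather than the final set-consensus counting, is where the real work lies.
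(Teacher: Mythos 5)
Your reduction skeleton is the paper's: $m+2$ BG simulators drive the putative \SMPfm algorithm through safe-agreement boxes, at most $m+1$ boxes are blocked at any time, so the simulated execution is a synchronous run with at most $m+1$ omissions per round, and termination would let $m+2$ wait-free processes solve $(m+1)$-set-consensus, contradicting \cite{BG,HS,SZ}. You also correctly locate the crux: one must show that the runs the simulation produces are \emph{legal} $n,f,m$ runs, i.e., that each per-round omitted set $O_r$ (of size up to $m+1$) decomposes as ``$f$ fixed plus $m$ mobile,'' which for $|O_r|=m+1$ means the fixed set must intersect every $O_r$. (Your worry about omission versus crash is handled by the paper's construction --- every processor's receives are simulated and every processor is carried to an output --- and is not the real obstacle.)

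The gap is in the device you propose for that crux. You partition $\Pi$ into $m+1$ disjoint blocks of size at most $f$ and assert that some ``wiring'' of blocks to simulators guarantees that a single block $B_i$, held constant as $S^f$, plus $m$ extra processors per round, covers every $O_r$. But which processors end up omitted is determined by the scheduler that preempts simulators in the middle of safe-agreement, not by your wiring: whenever $n\ge f+m+1$ the scheduler can place $O_r$ entirely inside the complement of any designated block (that complement has at least $m+1$ members), and by cycling --- making round $i$'s omitted set disjoint from $B_i$ for $i=1,\dots,m+1$ --- it defeats every block simultaneously, so no block can serve as $S^f$ even when chosen after the fact. The lemma you defer (``every reachable configuration respects the block partition'') is therefore exactly the false step, not bookkeeping. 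The paper does not try to constrain the simulation at all; it lets the simulation produce an unconstrained $(m+1)$-mobile-omission run and argues post hoc by counting omission frequencies: in the round-robin schedule each processor is omitted in $m+1$ of every $n$ rounds, so $f$ suitably spaced processors account for at least one omitter in $f(m+1)\ge n$ of every $n$ rounds, and in general one designates the $f$ most frequently omitted processors as the fixed set. That hitting-set count is where $n\le f(m+1)$ actually enters; your use of the inequality (so that $m+1$ blocks of size $f$ cover $\Pi$) is aimed at the wrong quantity. To repair the argument, drop the block partition and any claim of control over the simulation, and prove instead that the omitted sets of the simulated run admit an $f$-element transversal.
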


\begin{proof}
The proof is based on a simulation that uses  constructions similar to those used in~\cite{GGP05, GGP11}.
The details appear in the references, but nevertheless we sketch the construction of the simulation.

W.l.o.g. by way of contradiction an algorithm exists with all processors sending messages to all. 
Take $m+2$ BG processors \cite{GGP05,GGP11}
 simulating one round before moving to the next. 
In each round a simulator decides by safe-agreement \cite{BG,BGLR} 
whether a message sent is received or not.
A BG processor will  claim that a message was removed if it does not
know the simulated local state of the processor that sends the message.

Since we are dealing with \SMPfm we can equivalently deal with the election-version of set-consensus.
Initially, every simulator tries to install its input value as the input to all
simulated processors. At most $m+1$ safe agreements modules may be blocked and the corresponding 
processors cannot be simulated as sending messages. Thus, to proceed, when a simulator does not
know the local state of a processor, then  it will try to reach agreement 
that the message was removed. 

But in the meantime, the safe agreement for this processor might be resolved, hence other processors
may contend that a message was sent. If we do these message delivery safe agreements for a processor 
proceeding negatively (a message was not sent) from the highest index receiver to the lowest, and
in the opposite positive direction, from low index to high, for the message sent, at most one safe-agreement about a single message
from this processor may be blocked at the index that the positives and the negatives meet.
This will manifest itself as a message to some processor that we do not know whether it was removed
or received, and therefore we do not know the local state of this potentially receiving processor.

Thus, the initial possible lack of knowledge about the $m+1$ inputs may propagate to at most $m+1$ 
omission failures from round to round. Thus we get an execution of a synchronous system $n,m+1$ 
with $n$ processors and $m+1$-omission faults.


Now we move to the ramification of this simulation.  If there exists an algorithm
for the system $n,f,m$ such that each run of $n,m$ can be viewed as a run of $n,f,m$,
then the algorithm that obtains $m+1$-vector-set consensus for $n,f,m$ will be an algorithm for $m+1$-vector-set consensus
for the system $n,m+1$, contradicting \cite{BG,HS,SZ}. 

Hence, it must be that there exits a run of  $n,m+1$ that cannot be explained as a run of
an algorithm for $m+1$-vector-set consensus of a synchronous hybrid system $n,f,m$.

The system $n,f,m$ can have $m+1$ omission failure in a round by choosing each round $m+1$ processors 
from which to remove messages as to simulate a round of $n,m+1$. Observe that at least one of these $m+1$
processor chosen at a round has to be on the account of the $f$ fixed processors, since in $n,f,m$ we have only 
$m$-mobile omission faults at a round rather than $m+1$.

Thus, we want to show that for $n$ small enough, $n<f(m+1)+1$, given a run of $n,m+1$ we can allocate $f$-fixed
processors that explain the run as a run of $n,f,m$.


We take an infinite run of $n,m+1$ and divide it into chunks of $n$ rounds. In the first 
round processors $1,2, \ldots, m+1$ omit, in the second round $2,3, \ldots m+2$ omit, etc, with wrap around
at round $n$. Then repeat the same for the second chunk etc.

Thus, if we take any $k$ chunks like this, a processor appears in all the chunks $(m+1)k$ times each at a different round.
Thus, since we have $f$ fixed faults, the largest number of rounds we can justify with these
fixed faults is $f(m+1)k$. But if $nk>f(m+1)k$ we will not be able to justify it as an $n,f,m$ run.

Obviously for any $n \leq f(m+1)$ we will be able to justify it as an $n,f,m$ run.
Our example made the repetition of each processor equal. If it is unequal we will attribute
the processors that are at the top $f$ in ranking of repetitions as the fixed set and will
able to justify any run of $n,m+1$ as a run of $n,f,m$.
\commentout{
Take some $k$ rounds such that the number of appearances of each processor omitting is the same and
the run is completely symmetric under the exchange of two ids modulo permutation of rounds.
Let $A_i$ be the set of processors assigned to round $i$.
$$A_i=\{p_\ell\mid \ell=1+ ((j-1+(i-1)(m+1))\mod n), \mbox{ for }1\le j\le m+1\}\,.$$

Thus a processor appears in at most $(m+1)k/n$ rounds. Thus, the largest number of rounds for which
we can attribute at least one id to the $f$-fixed omission set is $f(m+1)k/n$. Since we need at least
one processor from the $f$-fixed set in each round we get the condition $f(m+1)k/n \geq k,$
which yields $n \leq f(m+1)$.
}
\end{proof}
The lower bound proof implies that if we take $n\geq f(m+1)+1$ we will not be able to explain the run as an $n,m+1$ run.
What is left is to show that indeed for $n\geq f(m+1)+1$ we have an algorithm that obtains ($m+1$)-vector-set consensus.

An easy repetition of the lower bound arguments above show that if $n< \lfloor f(m+j)/j \rfloor+1$ we cannot solve $m+j$-set consensus.


\section{Binary-$k$-vector-set Consensus}
For simplicity of exposition we will use only the binary-$k$-vector-set consensus even for \SMPfm as we know we can solve
the multi-valued one using \cite{Rachid}.
To show the phenomenon of consensus power growing gradually with $n$ given fixed $f$ and $m$,
we assume for convenience  that $m+f<n/2$ and $f<n/4$.
The algorithm rely on the idea of the rotating coordinator \cite{Reischuk198523,Chandra:1991:TME:112007.112027}.
Only that in hindsight we know that each phase of the implementation hides a \ca protocol \cite{Gafni98}.
\footnote{The \ca protocol is the concept behind last two rounds  of the original
Gradecast \cite{62225} protocol. }
Hence, we first pause the presentation and show how we solve \ca in \SMPfm and then in \SBMPfm.

\subsection{Commit-Adopt implementation for \SMPfm, and \SBMPfm }

In the \ca protocol each processor invokes \ca with an initial value. Each processor, $p$,  returns as an output a pair $\tri{v_p,e_p}$, where $e_p\in\{\mbox{\small\sc commit, adopt}\}$. \ca ensures that:
\beginsmall{enumerate}
\item[\textup{ CA1:}]
If all processors invoke \ca with the same value, then every processor, $p$, returns with that value and with  $e_p =\, ${\sc commit}.
\item[\textup{ CA2:}]
If a processor, $p$, returns  with $e_p =\, ${\sc commit}, then for any processor $q$, $v_q=v_p$. 
\endsmall{enumerate}

A \ca algorithm in \cite{Gafni98} is given in a wait-free shared-memory model.
To use this algorithm in different models we either implement the shared-memory in the model, or just show an implementation that 
comply with the properties that make \ca algorithm in shared-memory work.
The properties are to have two iteration where at most one value will be observed as a proposal to commit
in the second round, and if a processor views only commit proposal in the second round, then any
other processor in the second round will observe a proposal to commit.

In \SMPfm if all processors start with the same bit, every processor will get all messages of the same bit and there will be at least $n/2 + 1$ of them.
A processor that receives all same bit, will propose in the second round to commit that bit.
Obviously, since majority sent same bit, no other processor will propose to commit a different bit.
 
In the second round, a processor commits if it obtains at least $n/2+ 1$ proposals to commit.
A processor that does not propose to commit does not send a message.
Obviously if one processor receives  at least $n/2 +1$ proposals to commit, others cannot miss all these proposals and will see at least one proposal to commit,
and hence will adopt that bit.

In \SBMPfm we have to worry about the $f$ processors  whose messages the adversary is allowed to tamper with.
Thus, we cannot require to see all messages of the same bit, since then the adversary can prevent commit in the case
that all started with the same bit. Nevertheless we know that if all started with the same bit then a processor will get 
at least $n/2+1$ of that bit, and at most $f$ of the complement bit.

Hence, we use this test in order to propose commit at the second round. Since in the worst case the complement bit
was send by ``correct'' processors, we nevertheless are left with more than quarter of correct processors whose input is that bit.
Hence, this set of processors in the first round will prevent any other processor to propose to commit the complement bit, 
as the size of the set is greater than $f$. A processor that does not propose to commit does not send anything in the second round.

In the second round for a processor to ``read'' a commit proposal it is enough if it obtains at least $n/4$ commit proposals of the same
bit. To commit, a processor needs to receive again at least $n/2 +1$ proposals of commit of the same bit (We can now ignore commit
of the different bit since there can be anyway at most $f$ of them). 
Again we can argue that in the second round, if any processors commits, 
we must have at least $n/4$ correct processors which send commit proposal of that bit and hence all will at least adopt that bit.

\figureref{figure:ca-omm} and \figureref{figure:ca} present both versions of the \ca protocol. 

\begin{algorithm}[!ht]
\footnotesize
\SetNlSty{textbf}{}{:}
 \setcounter{AlgoLine}{0}
\begin{tabular}{ r l }
\lnl{line:o-vote1} &  {\bf round 1:}   send $v_p$ to all;\hspace{2.2in}\hfill\textit{/* executed by processor $p$ */}\\
\\
\lnl{line:o-test}&  {\bf round 2:}  {\bf if}  all ``bits'' received are the same  {\bf then} {\bf send} the ``bit'' to all  {\bf else} {\bf do not send};\\
\nl & \tb\due \textit{/* Notations */}\\
\nl & \tb\due {\bf let} $maj$ be the bit received and let $\#maj$ be the number of processors that sent it;\\
\\
\nl &  \textit{/* Scoring */}\\
\lnl{line:o-commit}& \tb {\bf if}  $\#maj> n/2 $ {\bf then} {\bf set}  $e_p := ${\sc commit}  {\bf else} {\bf set}  $e_p := ${\sc adopt};\\
\lnl{line:o-value}& \tb {\bf if} $\#maj>0$ {\bf then} {\bf set}  $v_p :=maj$;\hfill\textit{/* otherwise remain with the original $v_p$ */}\\
\\
\nl & {\bf return} $\tri{v_p,e_p}$.
\end{tabular}
\caption{ \ca\!$(v_p)$: The Commit Adopt protocol for omission faults}\label{figure:ca-omm}
\end{algorithm}


\begin{algorithm}[!ht]
\footnotesize
\SetNlSty{textbf}{}{:}
 \setcounter{AlgoLine}{0}
\begin{tabular}{ r l }
\lnl{line:vote1} &  {\bf round 1:}   send $v_p$ to all;\hspace{2.2in}\hfill\textit{/* executed by processor $p$ */}\\
\nl & \tb\due \textit{/* Notations */}\\
\lnl{line:1st}& \tb\due {\bf let} $maj$ be the bit received the most and let$\#maj$ be the number of processors that sent it;\\
\nl & \tb\due {\bf let} $min$ be the bit received the least and let its number be $\#min$;\\
\\
\lnl{line:test}&  {\bf round 2:}  {\bf if} $\#min \leq f$ and $\#maj> n/2 $ {\bf then} {\bf send}  $maj$  {\bf else} {\bf do not send};\\
\nl & \tb\due \textit{/* Notations */}\\
\nl & \tb\due {\bf let} $maj$ be the bit received the most and let $\#maj$ be the number of processors that sent it;\\
\\
\nl &  \textit{/* Scoring */}\\
\lnl{line:commit}& \tb {\bf if}  $\#maj> n/2 $ {\bf then} {\bf set}  $e_p := ${\sc commit}  {\bf else} {\bf set}  $e_p := ${\sc adopt};\\
\lnl{line:value}& \tb {\bf if} $\#maj \geq n/4$ {\bf then} {\bf set}  $v_p :=maj$;\hfill\textit{/* otherwise remain with the original $v_p$ */}\\
\\
\nl & {\bf return} $\tri{v_p,e_p}$.
\end{tabular}
 \caption{ \ca\!$(v_p)$: The Commit Adopt protocol for Byzantine faults}\label{figure:ca}
\end{algorithm}

\subsection{Binary-$k$-vector-set Consensus Protocol}

%
%
%

We first focus on the case of $k=m+1.$  For the discussion below assume that $n=f(m+1)+1.$ 
The idea of the protocol is to run in parallel the basic process for each of the $m+1$ entries in the binary-$(m+1)$-vector-set consensus.  The process below will ensure that in each entry in the output vector different processors never produce conflicting outputs, and that for at least one entry all processors report an output.

We assign $m+1$ coordinators to each phase of the protocol, one per entry in the vector.  The coordinators play a role in a specific round of sending messages in each phase, as described below.  We run the protocol for $f(m+1)+2$ phases, each takes three rounds of message exchange. 

For a given entry all processors repeatedly exchange their values in each phase.  Each phase begins with concurrently running a \ca on the current values of all processors.  In the first phase processors use their initial input values, and later phases the values  computed by the end of the previous phase. 

Following the \ca step the coordinator of the current phase broadcasts the value it obtained from the recent \ca.  

A processor that completed the recent \ca with {\sc commit} ignores the coordinator's message and updates its value to be the committed value of the \ca.
A processor that did not complete the recent \ca with {\sc commit} adopts the value it receives from the coordinator, if it received a value, if no value was received it remains with its original value.

By the end of this value updating we are guaranteed that if the coordinator was correct when it sent its coordinator's value, then all processors will end up holding identical values.  
The reason is 
that the \ca properties imply that if a processor returns from the \ca with {\sc commit}, all processors  return from the \ca with identical values, so this is also the value the correct coordinator sends.  
If this is not the case, every processor adopts the coordinator's value, and again they hold identical values.

Observe that our assumptions are that all processors receive the values from all correct processors, even when the adversary chooses to change their messages. Therefore, the current coordinator received the correct value from the \ca as every other processor.

Once all processors hold identical values, in all future phases the \ca at each processor will return {\sc commit} with that value, no matter who the rest of the coordinators are.

The above basic process is repeated for $f(m+1)+2$ phases for all the $(m+1)$ entries of the $(m+1)$-vector-set consensus.
After the end of the last phase each processor reports output for every entry in which the latest \ca returned  {\sc commit}. 
The \ca properties imply that there will not be any conflict on output values in any index. 
Moreover, for each entry for which in one of the first $f(m+1)+1$ phases there happened to be a correct coordinator sending its value,  all processors return that value for that entry.

What we are left  to discuss is why there would always be at least one correct coordinator in at least one entry in at least one phase. 
Although this argument is repetition of the argument in the lower-bound section, we repeat it here.  Observe that we assign to each phase   $m+1$ different coordinators.
The assignment of  coordinators to phases is such that for $n=f(m+1)+1$ each one appears in exactly $f+1$ different phases.  This implies that there can be at most $f(m+1)$ phases in which at least one of the coordinators assigned to entries in that phase is from the fixed set $f$.  Look at a phase in which no coordinator is from the fixed set.  The adversary can drop messages from at most $m$ of the coordinators that send their coordinators' values in that phase.  Therefore, there should be an entry at which the coordinator sending the coordinator's value is correct.

\begin{algorithm}[!ht]
\footnotesize
\SetNlSty{textbf}{}{:}
 \setcounter{AlgoLine}{0}
\begin{tabular}{ r l }
\nl & \textit{/* Initialization */}\mbox{\hspace{2.5in}}\hfill\textit{/* executed by processor $p$*/}\\
\lnl{line:input}& {\bf let} $v(j)$ be the initial input to consensus index $j$, $1\le j\le m+1$;\hfill\textit{/* the input values*/}\\
\nl &\tb\textit{/* the permutation over the set of  $n$ processors */}\\
\lnl{line:assign} & {\bf let} $s_{i,j}=p_\ell$, where $\ell=(i+j-1)\mod n$, for $1\le i\le  f(m+1)+2$, $1\le j\le m+1$;\\
\\
\nl & \textit{/* Main loop for each of the $k$ indices, $1\le j\le m+1$, all of them in parallel */}\\
\nl & {\bf for} phase $i:=1$ to $ f(m+1)+2$ {\bf do}\\

\lnl{line:ca} & \tb   $\tri{\hat v(j),\hat e(j)}=${\bf \ca}$(v(j))$;\\
\\
\lnl{line:sender} & \tb  {\bf if} $p=s_{i,j}$ {\bf then}    send $\hat v(j)$ to all;\hfill\textit{/* the rotating coordinator for index $j$ sends its value */}\\
\lnl{line:vote} & \tb\tb  {\bf let} $v'(j)$ be the value received from $s_{i,j}$;\hfill\textit{/* $\bot$ if no value was received */}\\
\\
\lnl{line:update1} & \tb {\bf if} $\hat e(j)= \mbox{\sc commit}$ {\bf then}  $v(j) := \hat v(j)$\\
\lnl{line:update2} & \tb\tb  {\bf else if} $v'(j)\not=\bot$ {\bf then} {\bf set} $v(j) := v'(j)$;\hfill\textit{/*  adopt coordinator $s_{i,j}$   value*/}\\
\nl & {\bf end for}\\
\\
\lnl{line:output}& {\bf for each} $j$, $1\le j\le m+1$: {\bf if} $\hat e(j)= \mbox{\sc commit}$ {\bf then} $o_p(j):=v(j)$  {\bf else} $o_p(j):=\bot$;\\
\\
\nl & {\bf return} $o_p[1..m+1]$.
\end{tabular}
 \caption[caption]{$\mSetCons$:  \\\hspace{\textwidth}\mbox{\ \hspace{0.75in}}a  \Byzantine $m+1$-vector-set consensus algorithm}\label{figure:mSetCons}
\end{algorithm}

Observe that for binary values one can replace the condition in \lref{line:value} of \figureref{figure:ca} to $\#maj>0$,  since if  no processor returns with {\sc commit} then  non-Byzantine processors have sent both $0$ and $1$.  
For non-binary values, instead of testing for $\#min$ we need to test for non-$max$ values,
and can replace the condition in \lref{line:value} of \figureref{figure:ca} to $\#maj>f$.
Moreover, 
one can  add a filtering  in \lref{line:1st}  to filter out values that do not conform with what one expects to receive, since they are clearly being sent by Byzantine processors.
Similar filter can be used in \lref{line:vote} of \figureref{figure:mSetCons}. 
We do not have any use for such a filtering in the protocols of the current paper.

One can generalize the lower bound proof of \sectionref{sec:lower-bound} for
$m+k$-vector-set consensus algorithm for $k \ge 1$ to obtain a lower bound of $n> \lfloor{f(m+k)\over k}\rfloor$.  
The \mSetCons protocol of
\figureref{figure:mSetCons} can be changed accordingly and will run in $ \lfloor{f(m+k)\over k}\rfloor+2$ phases. 
When increasing the number of of processors by $m$ one can device a protocol that runs 
for only $ \lfloor{f\over (k-m)}\rfloor +2$ phase.  Thus for $n \geq f(m+1)+1$ we can solve $m+1$-set consensus.
for $n \geq f(m+2)/2 +1$ we can solve $m+2$ set consensus, etc. 


All the formal proofs appear in the appendix.
\section{Conclusions}

We introduced a new type of distributed-system call Hybrid-Message-Adversary.
It gives rise to phenomenon never seen before of set-consensus power changing gradually
even though the various types and number of faults do not change.
%
%
%
In our mind the only notion of set consensus that makes sense
in the Byzantine setting is that of binary-vector-set-consensus.
To our knowledge we are the first to ask this question, and in fact we
are still at loss but not far, we suspect, from an answer.

Next, we can imagine message adversary with mobile Byzantine faults
and combinations thereof with omission fixed or mobile faults etc..
In fact, the analogue of message adversary with mobile Byzantine faults
was studied in the domain of Cryptography under the name of mobile viruses, transient or proactive faults 
\cite{Ostrovsky:1991,Canetti1994,Rabin1998,Castro:2000},
but none looked at the relative power of tasks, let alone the set-consensus power.

Why should we? We recently \cite{CSC} 
started to suspect that ``natural systems'' can be characterized 
by their set consensus power. Thus if this is proved and we equate the set-consensus power of synchronous Byzantine
of $f$ faults and \SBMPfm with $m=f$, then they will be equivalent.


\newpage

\thispagestyle{empty}
\bibliographystyle{plain}
\bibliography{bibliography}
\newpage
\appendix

\noindent {\Large \bf Appendix}

\section{Proofs}\label{sec:Proofs}
We prove the correctness of \ca for the \SBMPfm model. The proof for the \SMPfm is much simpler.

\begin{claim}\label{claim:CA}
Assume $n>\max(2f+2m,4f)$ and \SBMPfm adversary.  The protocol of \figureref{figure:ca} meets the \ca requirements.
\end{claim}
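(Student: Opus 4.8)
The plan is to verify CA1 and CA2 separately, in each case collapsing the hybrid fault behaviour to elementary counting at the two thresholds that appear in \figureref{figure:ca}: the commit/propose line $n/2$ (\lref{line:test} and \lref{line:commit}) and the adopt line $n/4$ (\lref{line:value}). Throughout I would use the stated model property that every processor receives the genuine value of every correct (non-$S^f$) processor, so that at any recipient the received multiset splits into at least $n-f-m$ genuine values from correct processors, at most $m$ values lost to mobile omission, and at most $f$ arbitrary values from the fixed Byzantine set. The two hypotheses then play distinct roles: $n>2f+2m$ is what lets a genuine majority survive the $m$ omissions and cross the $n/2$ line, while $n>4f$ is what makes a ``more than a quarter'' set of holders of a bit outnumber the $f$ Byzantine processors and so cross (and block) the $n/4$ line.

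For CA1, assume every correct processor invokes \ca with the same bit $b$. In round~1 each broadcasts $b$ on \lref{line:vote1}, so every processor receives $b$ from at least $n-f-m$ correct processors and receives $\bar b$ only from the $\le f$ Byzantine processors. Since $n>2f+2m$ gives $n-f-m>n/2\ge f$, at each correct processor the majority computed on \lref{line:1st} is $maj=b$ with $\#maj>n/2$ and $\#min\le f$, so the guard on \lref{line:test} holds and every correct processor resends $b$ in round~2. Running the identical count on the round-2 messages, every correct processor again sees $maj=b$ with $\#maj>n/2$, hence commits on \lref{line:commit} and sets $v_p:=b$ on \lref{line:value}. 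This gives CA1.

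For CA2, suppose a correct processor $p$ returns {\sc commit} with value $b$; I must show every correct processor $q$ finishes with $v_q=b$ (Byzantine outputs are meaningless and are excluded). The argument is in two parts. First, \emph{consistency of round-2 proposals}: because $p$ commits it saw more than $n/2$ round-2 copies of $b$, of which at most $f$ are Byzantine, so more than $n/2-f$ correct processors proposed $b$ on \lref{line:test}; each such proposer had already seen more than $n/2$ copies of $b$ in round~1, hence more than $n/2-f$ genuine correct holders of $b$. Since $n>4f$ forces $n/2-f>n/4>f$, there is a set of more than $f$ correct processors whose round-1 value is $b$, and by the delivery property every processor receives all of their $b$-values. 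Any processor that tried to propose $\bar b$ would need $\#min\le f$ on \lref{line:test} with $\bar b$ as its majority, yet it receives more than $f$ copies of $b$ from this set, so its guard fails; hence correct processors propose only $b$ in round~2.

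Second, \emph{adoption}: every correct $q$ receives the more-than-$n/4$ correct $b$-proposals together with at most $f$ Byzantine $\bar b$-messages, and since $n>4f$ the majority at $q$ on the round-2 count is $b$ with $\#maj\ge n/4$, so the guard on \lref{line:value} fires and $q$ sets $v_q:=b$ regardless of whether $q$ itself commits; combined with the first part this yields $v_q=v_p=b$ for every correct $q$, which is CA2. The step I expect to be the main obstacle is precisely this consistency argument, where the two fault types erode the counts at once --- the $f$ Byzantine values can be double-counted against the commit threshold while the $m$ omissions would thin the genuine quorums seen by a \emph{second} processor. Keeping the quarter-sized set of $b$-holders above $f$ after it is relayed to another processor is exactly where the model guarantee that every processor receives the value of every correct processor is indispensable: without it the same bookkeeping would demand the far stronger $n>4(f+m)$, whereas the delivery property lets $n>4f$ suffice. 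I would present the omission-only protocol of \figureref{figure:ca-omm} first as a warm-up, since there the Byzantine term vanishes, the $\#min\le f$ guard degenerates to ``all received bits equal'', and the same counts go through under $n>2f+2m$ alone.
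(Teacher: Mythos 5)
Your CA1 argument matches the paper's and is fine. The problem is in CA2, and it is concentrated in the ``delivery property'' you lean on --- that every processor receives the genuine value of every correct (non-$S^f$) processor. That is not a property of \SBMPfm: in \emph{every} round the adversary additionally picks a mobile set $S^m$ of $m$ processors, typically disjoint from $S^f$, and may remove any subset of their outgoing messages, with different recipients treated differently. So the set of more than $n/2-f$ correct holders of $b$ that you extract from one proposer's round-1 view is \emph{not} guaranteed to be seen by another processor; up to $m$ of them can be silenced toward that processor. Your own bookkeeping then shows the argument collapses: the guard-blocking step needs $n/2-f-m>f$, i.e.\ $n>4f+2m$, and the adoption step needs $n/2-f-m\ge n/4$, i.e.\ $n\ge 4(f+m)$, neither of which follows from $n>\max(2f+2m,4f)$ (try $f=m=5$, $n=21$). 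You flag this yourself when you say that without the delivery property you would need $n>4(f+m)$ --- that is an admission that the proof as written rests on a model assumption the paper does not grant.

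The paper's proof avoids this by working with the processors that are \emph{correct at the current sending step}, i.e.\ outside $S^f\cup S^m$ for that particular round: there are at least $n-f-m$ of them and it is only \emph{their} messages that provably reach every recipient identically. Proposal consistency then comes from applying the $\#min\le f$ guard at \emph{both} would-be proposers against this common set (if $p'$ proposes $b$ and $q'$ proposes $\bar b$, each bounds by $f$ the number of correct-at-step holders of the other bit, forcing $n-f-m\le 2f$), and the hypothesis is used exactly through the case split $m\ge f$ versus $m<f$ that you omit: $n>2f+2m$ handles the first case and $n>4f$ the second, together giving $n>3f+m$. Your warm-up remark about \figureref{figure:ca-omm} is also off for the same reason --- the omission-only counts must still absorb the $m$ mobile losses, not just the $f$ fixed ones. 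To repair your write-up, replace ``correct'' by ``correct at the current sending step'' throughout, carry the $-m$ term in every count, and reproduce the $m\ge f$ / $m<f$ dichotomy; the adoption step should likewise be argued from the correct-at-step proposers rather than from all non-$S^f$ proposers.
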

\begin{proof}
To prove property CA1 observe that if all processors send an identical value in \lref{line:vote1} of \figureref{figure:ca} then each processor can receive at most  $f$ different values, from processors the adversary tampers with their messages. Therefore, at every processor the test in \lref{line:test} will bring it to propose this identical value.  Exactly for the same reasons each processor, $p$, will set up $e_p$ to be {\sc commit} and will return the identical value.

For CA2, assume that  processor $p$ proposes a value in the second round of the protocol. Thus, the test in \lref{line:test} is $\true$.
Denote by  $maj_p$,  $\#maj_p$ and  $\hat n_p$ the parameters processor $p$ used in \lref{line:test} and $V_p$ the multiset of $\bar n_p$ values it received, and 
respectively for another arbitrary processor $q$.  
Consider two cases. The first case is when $m\ge f$. 
In our model $V_p$ may not contain at most the $m$ values from omission faults and some of the values from the Byzantine processors. Since $\#maj_p> n/2 $ and $\#maj_p\ge  \hat n_p-f$ there are at least $m+1$ 
values in $V_p$ from  processors that are correct at the current sending step  and sent $v_p$ to all.  $V_q$ should contain all these processors' values. The assumption that $m\ge t$ implies that $V_q$ contains at least $f+1$ value of $v_p$, therefore, \lref{line:test} of \figureref{figure:ca} implies that if  decides to propose a value it should be that $v_q=v_p$.  Now consider the case that $m<f$.  In this case the condition $n>4f$ implies that $n/2>2f.$ This implies that  $V_p$ contains at least $n/2-f$  values from  processors that are correct at the current sending step that sent $v_p$, thus at least $f+1$, and the rest of the above arguments hold.

The above argument implies that no two non-Byzantine processors send different values in the second round. 
Assume that  processor $p$ commits in \lref{line:commit}.
Let $v$ be a value committed to.  
Assume the above notations for the messages received in the 2nd round.
Neither $V_q$ nor $V_p$ can contain more than $f$ non $\bot$ values the are not $v$.
Thus, the protocol implies that $maj_p=v$.
By definition we know that $|V_q\cap V_p|\ge n/2$, all of which are correct at the current message sending step. 
Since $\#maj_p\ge  \hat n_p-f$ and $\#maj_p> n/2 $,  $V_q$ contains at least $n/2-f$ copies of $v$, thus more than $n/4$ and more than $f$.  
\end{proof}

\begin{claim}\label{claim:samev}
Assume $n>\max(2f+2m,4f)$ and \SBMPfm adversary.  In the \mSetCons protocol, if all processors' initial values to a given index $j$ are the same, they output that value for that index at the end.
\end{claim}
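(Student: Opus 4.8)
The plan is to prove this validity property (it is exactly \textup{vset1} for index $j$) by induction on the $f(m+1)+2$ phases of the \mSetCons protocol, with \claimref{claim:CA} (property CA1) supplying both the base case and the inductive step. The single idea that makes the argument work is that one \ca invocation on a common value forces everyone to commit it, and that committing renders the coordinator's message irrelevant --- which matters because the phase coordinator $s_{i,j}$ may lie in the fixed set $S^f$ and thus have its message tampered with by the adversary.

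For the base case I would observe that, by hypothesis, every processor invokes \ca in \lref{line:ca} of phase $1$ with the common initial value $v$ for index $j$. Since $n>\max(2f+2m,4f)$, \claimref{claim:CA} applies, and property CA1 guarantees that every processor returns $\hat v(j)=v$ with $\hat e(j)=\mbox{\sc commit}$. The test in \lref{line:update1} then succeeds everywhere, so each processor sets $v(j):=\hat v(j)=v$ and never reaches \lref{line:update2}; in particular the value received from coordinator $s_{1,j}$ is discarded, and all processors leave phase $1$ holding $v(j)=v$.

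The inductive step would be identical: if all processors begin phase $i$ holding $v(j)=v$, they all call \ca$(v)$; by CA1 they all return $\mbox{\sc commit}$ with value $v$, and \lref{line:update1} keeps $v(j)=v$ no matter what $s_{i,j}$ sends. Hence $v(j)=v$ is maintained across all $f(m+1)+2$ phases, so the final \ca returns $\mbox{\sc commit}$ with $v$ at every processor, and the output rule in \lref{line:output} sets $o_p(j):=v(j)=v$ for every processor $p$, which is the claim.

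I expect the only delicate point to be the interaction with a possibly-tampered coordinator, and the argument sidesteps it rather than confronting it: because CA1 yields $\mbox{\sc commit}$ at every processor, \lref{line:update1} always takes precedence over \lref{line:update2}, so the (possibly corrupted) coordinator value is simply never used. The correctness therefore reduces entirely to the commit guarantee of \ca established in \claimref{claim:CA}, and no separate reasoning about coordinator honesty is required.
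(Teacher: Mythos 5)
Your proof is correct and follows essentially the same route as the paper's: both arguments reduce the claim to property CA1 of \claimref{claim:CA}, observe that a unanimous input forces every \ca invocation to return {\sc commit} with that value so that \lref{line:update2} is never executed and the coordinator's (possibly tampered) message is irrelevant, and conclude via \lref{line:output}. Your version merely makes the phase-by-phase induction explicit where the paper states it in one line; no gap.
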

\begin{proof}
In \mSetCons processors update their initial values either in \lref{line:update1} or \lref{line:update2} of \figureref{figure:mSetCons}.  By \claimref{claim:CA} we know that any invocation of it will result with all processors obtaining the identical value and with evaluation ``commit", therefore \lref{line:update2} will never be executed, and at the end of the protocol all will produce an identical value.
\end{proof}

\begin{claim}\label{claim:correct-coordinator}
Assume $n>\max(2f+2m,4f)$ and \SBMPfm adversary.  In the \mSetCons protocol, if  for some index $j$ at some phase the coordinator $p$ is correct when executing \lref{line:sender}, then after executing \lref{line:update1} or \lref{line:update2} of \figureref{figure:mSetCons} in that phase, $v(j)$ is identical at all processors. 
\end{claim}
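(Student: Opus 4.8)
The plan is to prove that, after the update, every processor holds the single value $u := \hat v_p(j)$ that the correct coordinator $p = s_{i,j}$ computes from the \ca in \lref{line:ca} and broadcasts in \lref{line:sender} of this phase (I write $\hat v_q(j)$ and $\hat e_q(j)$ for the \ca output of processor $q$). First I would record the only consequence of $p$ being correct at \lref{line:sender}: under the \SBMPfm adversary the message of a sender that is correct --- neither tampered with nor omitted --- reaches every processor unaltered, so each processor sets $v'(j) = u$ in \lref{line:vote}, and in particular $v'(j) \neq \bot$.

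Next I would fix an arbitrary processor $q$ and split on the commit/adopt flag $\hat e_q(j)$ that $q$ obtained from the \ca. If $\hat e_q(j) = \textsc{commit}$, then $q$ executes \lref{line:update1} and sets $v(j) := \hat v_q(j)$; here I invoke property CA2 of \claimref{claim:CA} with $q$ in the role of the processor returning {\sc commit}, which forces the \ca output of every processor to equal $\hat v_q(j)$, and taking the coordinator $p$ in particular yields $u = \hat v_p(j) = \hat v_q(j)$, so $q$ ends the phase with $v(j) = u$. If instead $\hat e_q(j) = \textsc{adopt}$, then $q$ skips \lref{line:update1} and, since it received $v'(j) = u \neq \bot$, executes \lref{line:update2} and sets $v(j) := u$. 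In both cases $v(j) = u$, and as $q$ was arbitrary all processors agree on $u$.

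The step I expect to carry the weight is the commit case, where I must rule out that any processor commits in this phase to a value different from the one the coordinator broadcasts. This is exactly what CA2 supplies: within a single \ca instance the committed value is unique and coincides with the \ca output of every processor, the coordinator included, so the correct coordinator necessarily broadcasts that very value. I would therefore make the appeal to \claimref{claim:CA} explicit, and note that its guarantees apply to every processor under \SBMPfm, since in this model all processors run the protocol faithfully and only the outgoing messages of the fixed set are tampered with. The remaining observations --- that a correct sender's message is delivered, and that the $v'(j) \neq \bot$ guard of \lref{line:update2} therefore fires --- are routine and require no arithmetic in $n$, $f$, and $m$.
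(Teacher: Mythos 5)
Your proof is correct and follows essentially the same route as the paper's: the commit branch is discharged by CA2 (the committed value coincides with every processor's \ca output, hence with the value the correct coordinator broadcasts), and the adopt branch by the fact that a coordinator that is correct at \lref{line:sender} has its message delivered to all, so the $v'(j)\neq\bot$ guard of \lref{line:update2} fires. The only cosmetic difference is that you split per processor on $\hat e_q(j)$ while the paper splits globally on whether any processor commits; the two decompositions are interchangeable here.
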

\begin{proof}
Let $p$ be the correct processor executing \lref{line:sender} of some index $j$, in some phase $r$. Consider two cases.  If there is any processor that completes  the \ca in \lref{line:ca} of phase $r$ with ``commit" and the other if none.  In the first case, by the \ca properties we know that all processors complete the \ca with the same value (including the faulty processors).  Therefore, the value $p$ will send in \lref{line:sender} is the same, and therefore all processors will complete \lref{line:vote} with the same value. This implies that no matter which of the two lines, \lref{line:update1} or \lref{line:update2}, any processor executes, all obtain the same value.
In the second case, no processor completes \lref{line:ca} with ``commit", and therefore all will execute \lref{line:update2} and will obtain the value the correct sender sent when it was correct while executing  \lref{line:sender}.
\end{proof}

%
%
%
%
%

%
\begin{theorem}\label{thm:mSetCons}
If $n>\max\bigl(f(m+1), 4f, 2(m+f)\bigr)$ and assuming \SBMPfm adversary, the \mSetCons protocol satisfies the properties of $k$-vector-set consensus, for $k= m+1$.
\end{theorem}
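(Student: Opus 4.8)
The plan is to verify the three requirements vset1, vset2, vset3 of $(m+1)$-vector-set consensus in turn, reusing the three claims already established (all of which hold under $n>\max(2f+2m,4f)$, while the extra hypothesis $n>f(m+1)$ supplies the counting slack). Property vset1 is immediate: it is exactly the content of \claimref{claim:samev}, which states that if all initial values for an index $j$ agree, then every processor outputs that common value for $j$.

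For vset2 I would treat agreement and validity separately, index by index. A processor $p$ sets $o_p(j)=v(j)$ only when its last-phase \ca for index $j$ returns {\sc commit} (\lref{line:output} of \figureref{figure:mSetCons}); otherwise $o_p(j)=\bot$. So suppose both $p$ and $q$ output non-$\bot$ for index $j$. Then both completed the last-phase \ca with {\sc commit}, and property CA2 of \claimref{claim:CA} forces every processor's returned value of that \ca to equal $p$'s committed value, in particular $q$'s; hence $o_p(j)=o_q(j)$. Validity (the common value is an initial input at index $j$ of some participating processor) follows by a straightforward induction over phases: a processor's $v(j)$ is only ever overwritten by a committed \ca value (\lref{line:update1}) or by an adopted coordinator value (\lref{line:update2}), both of which are current values of some processor, so every $v(j)$ traces back to an initial input. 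For the binary case this is automatic whenever the inputs are not all equal, since then both $0$ and $1$ occur as inputs.

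The heart of the proof is vset3, and this is where I expect the real work. I would first reprove, in the present setting, the counting statement from \sectionref{sec:lower-bound}: among the first $f(m+1)+1$ phases there is a phase whose $m+1$ coordinators contain no member of the fixed set $S^f$. By the assignment $s_{i,j}=p_\ell$, $\ell=(i+j-1)\bmod n$ of \lref{line:assign}, the coordinators of phase $i$ form a block of $m+1$ consecutive processors $\{p_i,\dots,p_{i+m}\}$ (indices mod $n$), shifting by one each phase. A fixed processor $p_\ell$ lies in the block of phase $i$ for exactly the $m+1$ values $i\in\{\ell-m,\dots,\ell\}\pmod n$; restricting to the first $f(m+1)+1\le n$ phases (using $n>f(m+1)$, so these phase indices stay within a single cycle mod $n$), each of the $f$ fixed processors spoils at most $m+1$ of them, so at most $f(m+1)$ phases are spoiled. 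Since there are $f(m+1)+1$ such phases, at least one phase $r\le f(m+1)+1$ is clean.

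In that clean phase $r$ none of the $m+1$ coordinators is in $S^f$, and the mobile adversary may silence the messages of at most $m$ of them; hence some coordinator, for some index $j$, is correct while executing \lref{line:sender}. \claimref{claim:correct-coordinator} then yields that after phase $r$ all processors hold the same $v(j)$. Once the values for index $j$ coincide, property CA1 of \claimref{claim:CA} guarantees that every later \ca for $j$ returns {\sc commit} with that common value and the value is never changed again; since $r\le f(m+1)+1$ is strictly before the final phase $f(m+1)+2$, the last-phase \ca for $j$ returns {\sc commit} at every processor, so all processors output the same value for index $j$. This establishes vset3 and completes the proof. The main obstacle is making the counting airtight---specifically, confining attention to the first $f(m+1)+1$ phases so the rotation stays within one cycle mod $n$ and the per-processor bound of $m+1$ spoiled phases is exact, and checking that the clean phase indeed occurs strictly before the final phase so that agreement has time to stabilize.
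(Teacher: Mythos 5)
Your proposal is correct and follows essentially the same route as the paper's proof: the same counting argument over the first $f(m+1)+1$ phases to find a phase whose $m+1$ coordinators avoid the fixed set, the same appeal to \claimref{claim:correct-coordinator} and CA1 to lock in agreement on one index, and CA2 for consistency on the remaining indices. Your treatment is if anything slightly more explicit than the paper's about the mod-$n$ cycle bound and about validity for vset2, but these are refinements of the same argument, not a different approach.
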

\begin{proof}
In order to prevent having any correct coordinator in a phase all $m+1$ processors assigned to be coordinators in that phase need to either be in the fixed set of $f$ faulty, or one of the $m$ processors that suffer from omission in that phase.
The definition of $s_{i,j}=p_\ell$, where $\ell=(i+j-1)\mod n$, for $1\le i\le  f(m+1)+2$, $1\le j\le m+1,$ assigns each processor $p$ to exactly $m+1$ times in the first $f(m+1)+1$ phases.  The fix set of $f$ processors appear in at most $f(m+1)$ phases.  Therefore, there is a phase in which none of them appear.  Since omission faults can silence at most $m$ processors in that phase, there is a correct processor executing \lref{line:sender}  of \figureref{figure:mSetCons}  in that phase. Let $j$ be the index of that processor.

By \claimref{claim:correct-coordinator} we know that by the end of  that phase all processors hold the same values in their $v(j)$.  From the next phase on, until the end of phase $f(m+1)+2$ all will complete \lref{line:ca} in \figureref{figure:mSetCons} with ``commit", and will end up having the same value in the $j$-th index.  Moreover,  for any other index $j'$, two processors that  assign a value to that index, assign the same value, since it is the value they completes \lref{line:ca} of index $j'$ with ``commit", and it is an identical value.
The remaining property of \kSetCons obviously holds as well.
\end{proof}
%


%
%

\end{document}